\definecolor{webgreen}{rgb}{0,.5,0}
\definecolor{webbrown}{rgb}{.6,0,0}
\DeclareMathOperator{\ord}{ord}
\begin{document}

\theoremstyle{plain}
\newtheorem{theorem}{Theorem}
\newtheorem{corollary}[theorem]{Corollary}
\newtheorem{lemma}[theorem]{Lemma}
\newtheorem{proposition}[theorem]{Proposition}

\theoremstyle{definition}
\newtheorem{definition}[theorem]{Definition}
\newtheorem{example}[theorem]{Example}
\newtheorem{conjecture}[theorem]{Conjecture}

\theoremstyle{remark}
\newtheorem{remark}[theorem]{Remark}

\title{Kleene Star of the Primes is not Regular in Any Base}

\author{Jason Yuen\\
School of Computer Science\\
University of Waterloo\\
Waterloo, ON  N2L 3G1\\
Canada\\
\href{mailto:j35yuen@uwaterloo.ca}{\tt j35yuen@uwaterloo.ca}}

\maketitle

\begin{abstract}
Let $P_b$ denote the primes expressed in base-$b$. In this note, we prove that $P_b^*$ is not regular. This strengthens a classical result that $P_b$ is not regular, due to Minsky and Papert in 1966.
\end{abstract}

\section{Introduction}

Let $P_b$ be the language of primes expressed in base $b \ge 2$, using the most significant digit first, and disallowing leading zeroes. Minsky and Papert \cite{Minsky&Papert:1966} proved in 1966 that $P_b$ is not regular. Furthermore, Shallit \cite{Shallit:1996} proved in 1996 that any finite automaton that correctly recognizes primes up to $N$ must have $\Omega(N^{1/43})$ states. Hartmanis and Shank \cite{Hartmanis&Shank:1968}, and, independently, Sch\"utzenberger \cite{Schutzenberger:1968} proved in 1968 that the primes are not context-free.

We show a new result that $P_b^*$ is not regular. Firstly, assume $P_b^*$ is regular. Using number theory, we find a prime of the form $K \times b^n + 1$ where $K$ is sufficiently large, and $k \times b^n + 1$ is composite for all $k < K$. Using the classical pumping lemma for regular languages, $P_b^*$ contains $k \times b^n + 1$ for some $k < K$. This leads to a contradiction.

\section{The main theorem}

\begin{theorem}[Dirichlet]
Let $a, d \ge 1$ be coprime. Then there are infinitely many primes of the form $a + kd$ where $k \ge 1$.
\end{theorem}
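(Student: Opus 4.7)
This is a classical deep result, so in the final paper I would simply cite a standard reference such as Davenport's \emph{Multiplicative Number Theory} or Apostol's \emph{Introduction to Analytic Number Theory} rather than reproving it here. For completeness, the standard plan of attack is Dirichlet's own analytic approach via $L$-functions.

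First, introduce the group of Dirichlet characters modulo $d$, i.e., the homomorphisms $\chi\colon (\mathbb{Z}/d\mathbb{Z})^{*} \to \mathbb{C}^{*}$, extended by zero to all of $\mathbb{Z}$, together with the associated $L$-functions $L(s,\chi) = \sum_{n \ge 1} \chi(n) n^{-s}$. These converge absolutely for $\mathrm{Re}(s) > 1$ and factor as Euler products $L(s,\chi) = \prod_p (1 - \chi(p) p^{-s})^{-1}$. Using the orthogonality relations for characters on $(\mathbb{Z}/d\mathbb{Z})^{*}$, together with the hypothesis $\gcd(a,d)=1$, one then isolates the congruence class of $a$ and obtains
\[
\sum_{\substack{p \text{ prime}\\ p \equiv a \pmod{d}}} p^{-s} \;=\; \frac{1}{\varphi(d)} \sum_{\chi} \overline{\chi}(a)\, \log L(s,\chi) \;+\; O(1)
\]
as $s \to 1^{+}$.

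The principal character contributes $L(s,\chi_{0})$, which differs from the Riemann zeta function only by finitely many Euler factors and therefore inherits a simple pole at $s=1$. Its contribution to the right-hand side grows like $\frac{1}{\varphi(d)} \log \frac{1}{s-1}$, so the left-hand sum diverges, forcing infinitely many primes $p \equiv a \pmod{d}$, \emph{provided} the non-principal contributions remain bounded as $s \to 1^{+}$.

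The main obstacle, and the deep part of the theorem, is to show that $L(1,\chi) \neq 0$ for every non-principal $\chi$, so that $\log L(s,\chi)$ stays bounded near $s=1$. For complex characters ($\chi \neq \overline{\chi}$) this is a short argument exploiting the non-negativity of the Dirichlet coefficients of the product $\prod_{\chi} L(s,\chi)$. For real non-principal characters it is genuinely delicate; Dirichlet's original proof proceeds via the class number formula for binary quadratic forms, while modern textbook proofs typically use a contour integration argument or a direct analysis of partial sums of $\chi(n)/n$. Since this non-vanishing is the only non-trivial input, I would in practice just quote the theorem and move on.
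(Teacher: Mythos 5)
Your proposal is correct and matches the paper's treatment: the paper states Dirichlet's theorem without proof, as a classical result to be cited, and your outline of the standard $L$-function argument (orthogonality of characters, the pole of $L(s,\chi_0)$ at $s=1$, and the non-vanishing $L(1,\chi)\neq 0$ for non-principal $\chi$ as the deep step) is an accurate sketch of the standard proof. Citing a reference such as Apostol or Davenport, as you suggest, is exactly the right move here.
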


\begin{definition}
For each $b \ge 2$, define $f_b : \mathbb Z^+ \to \mathbb Z^+$ as follows. For each $n$, let $f_b(n)$ be the smallest $k$ such that $k \times b^n + 1$ is prime. By Dirichlet's Theorem with $(a, d) = (1, b^n)$, there are infinitely many primes of the form $k \times b^n + 1$ with $k \ge 1$, thus $f_b(n)$ is well-defined.
\end{definition}

\begin{proposition}\label{bigf}
Let $b \ge 2$ and $K \ge 1$. There exists $N \ge 2$ such that $f_b(N) > K$.
\end{proposition}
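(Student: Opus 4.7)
The plan is to produce an $N \ge 2$ for which every one of $b^N+1,\, 2b^N+1,\, \ldots,\, Kb^N+1$ carries a small built-in prime divisor. Separately for each $k$ I will single out a prime $p_k$ tailored to $k$, then arrange the $K$ resulting congruence conditions on $N$ so that they all collapse to one arithmetic progression.

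First, for each $k \in \{1, \ldots, K\}$ I would take $p_k$ to be any prime divisor of $kb+1$; it exists because $kb+1 \ge 3$, and the coprimality $\gcd(kb+1,\,kb)=1$ shows both $p_k \nmid b$ and $p_k \nmid k$. The point of this choice is that $b \equiv -k^{-1} \pmod{p_k}$, so the condition $p_k \mid kb^N+1$ rewrites as $b^N \equiv b \pmod{p_k}$, i.e.\ $\ord_{p_k}(b) \mid N-1$. Setting $L = \operatorname{lcm}_{1 \le k \le K} \ord_{p_k}(b)$ and $N = 1 + L$ would then force $p_k \mid kb^N+1$ for every $k \le K$.

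Two routine checks remain: that $N \ge 2$, which is immediate from $L \ge 1$, and that $kb^N+1 > p_k$ so the divisibility is proper and yields a composite value, which is immediate from $N \ge 2$ together with $p_k \le kb+1$. Hence $f_b(N) > K$.

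The only place a genuine obstacle could arise is the simultaneous compatibility of the $K$ congruences, and this is precisely why I single out $p_k$ as a divisor of $kb+1$ rather than of $k+1$: with the $N=1$ shift, every congruence has the uniform form $N \equiv 1$ modulo its own modulus, so the lcm combines them without any CRT subtlety. An alternative anchored at $N=0$ would need $k+1$ to have a prime factor coprime to $b$, which can fail when $k+1$ is supported on the prime divisors of $b$; routing through $N=1$ sidesteps that issue entirely.
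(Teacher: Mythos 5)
Your proof is correct and follows essentially the same idea as the paper: both arrange $N \equiv 1$ modulo the order of $b$ with respect to a divisor of $kb+1$, so that $kb^N+1 \equiv kb+1 \equiv 0$ modulo that divisor for every $k \le K$. The only differences are cosmetic---you take $N = 1 + \operatorname{lcm}$ of the orders of $b$ modulo a prime factor $p_k$ of $kb+1$, while the paper takes the explicit common multiple $N = (bK)!+1$ and works with the modulus $bk+1$ itself (valid since $\gcd(b,bk+1)=1$).
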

\begin{proof}
Let $N = (bK)! + 1$. Note that $N \ge 2!+1 > 2$. It remains to show $f_b(N) > K$.

Let $1 \le k \le K$. We show that $bk+1$ is a factor of $k \times b^N + 1$. Firstly, consider $\ord(b)$ in $\mathbb Z_{bk+1}$. Since $\gcd(b, bk+1) = 1$, we have $b \in \mathbb Z_{bk+1}^*$ and $\ord(b) \le bk \le bK$. Therefore, $b^{(bK)!} \equiv 1 \pmod{bk+1}$. Compute $k \times b^N + 1$ in $\mathbb Z_{bk+1}$:
$$
k \times b^N + 1
\equiv bk \times b^{(bK)!} + 1
\equiv bk \times 1 + 1
\equiv 0 \pmod{bk+1}
$$

Therefore, $bk+1 \mid k \times b^N + 1$. Firstly, $bk+1 \ge 2+1 > 1$. Secondly, $N \ge 2$ implies $bk+1 < k \times b^N + 1$. Therefore, $k \times b^N + 1$ is composite. It follows that $f_b(N) > K$.
\end{proof}

\begin{definition}
For $b \ge 2$ and $n \ge 0$, let $(n)_b$ be $n$ expressed in base $b$, with the most significant digit first, and with no leading zeroes.
\end{definition}

\begin{lemma}\label{noP*}
Let $b \ge 2$ and $n \ge 1$. Let $1 \le k < f_b(n)$. Then $(k \times b^n + 1)_b \notin P_b^*$.
\end{lemma}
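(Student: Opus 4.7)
The plan is to argue by contradiction using the special shape of the base-$b$ representation of $k \times b^n + 1$. Writing $(k)_b = d_1 d_2 \cdots d_m$, we have $(k \times b^n + 1)_b = d_1 d_2 \cdots d_m 0^{n-1} 1$, since multiplying by $b^n$ appends $n$ zeros and adding $1$ flips the last zero to a $1$. The large block of zeros near the right end will severely restrict any possible factorization into elements of $P_b$.

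Assume the string factors as $w_1 w_2 \cdots w_r$ with each $w_i \in P_b$. The key step is to analyze the last factor $w_r$, which is a suffix ending in $1$. As a prime's base-$b$ representation, $w_r$ cannot begin with $0$ and is not the single digit $1$ (since $1$ is not prime). Combined with the $n-1$ zeros just before the final $1$, this forces $w_r$ to have length at least $n+1$, so it extends into the digit block $d_1 \cdots d_m$. Hence $w_r = d_i d_{i+1} \cdots d_m 0^{n-1} 1$ for some $i \in \{1, \ldots, m\}$ with $d_i \ne 0$, and the integer it denotes equals $k' \times b^n + 1$, where $k'$ is the positive integer whose base-$b$ representation is $d_i \cdots d_m$. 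Since $k'$ is formed from a suffix of the digits of $k$, we have $1 \le k' \le k < f_b(n)$, so by the definition of $f_b(n)$ the number $k' \times b^n + 1$ is composite, contradicting the primality of $w_r$.

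I do not expect any serious obstacle; the argument is a direct structural analysis of where the zero block can sit inside a $P_b^*$ factorization. The one thing to watch is the edge case $n = 1$, where the zero block is empty: it is absorbed cleanly because the only length-$1$ suffix ending in $1$ is the digit $1$ itself, which is not in $P_b$, so $w_r$ still has to extend into the digits of $k$ and the rest of the argument carries through unchanged.
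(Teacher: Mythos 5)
Your proof is correct and follows essentially the same route as the paper: both isolate the last factor of a supposed $P_b^*$ factorization, rule out a leading zero and the single digit $1$, and conclude it must denote $k' \times b^n + 1$ with $1 \le k' \le k < f_b(n)$, hence composite. The only difference is presentational (you phrase it via the full factorization $w_1 \cdots w_r$ rather than just a suffix $z \in P_b$), so no changes are needed.
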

\begin{proof}
Let $x = (k \times b^n + 1)_b = (k)_b 0^{n-1} 1$. For the sake of contradiction, suppose $x \in P_b^*$. Therefore, there is a suffix $z$ such that $x = yz$ and $z \in P_b$. Consider the cases for the first index of $z$.

If $z$ starts with $0$ then $z \notin P_b$. If $z = 1$ then $z \notin P_b$. Lastly, suppose the first index of $z$ is in $(k)_b$. Then $z$ represents $k' \times b^n + 1$ where $1 \le k' \le k$. Since $1 \le k' \le k < f_b(n)$, $k' \times b^n + 1$ is not prime, so $z \notin P_b$. Contradiction.
\end{proof}

\begin{theorem}\label{PSnotreg}
Let $b \ge 2$. $P_b^*$ is not regular.
\end{theorem}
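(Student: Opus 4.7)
The plan is to derive a contradiction from the pumping lemma for regular languages, using Proposition~\ref{bigf} to manufacture a convenient prime and Lemma~\ref{noP*} to close the argument. Assume $P_b^*$ is regular with pumping length $p$. Applying Proposition~\ref{bigf} with $K = b^{p-1}$, I would fix $N \ge 2$ such that $k := f_b(N) > b^{p-1}$, so that the string $(k)_b$ has at least $p$ digits. Since $k \times b^N + 1$ is prime by the definition of $f_b$, the word
$$
x \;=\; (k \times b^N + 1)_b \;=\; (k)_b\,0^{N-1}\,1
$$
belongs to $P_b \subseteq P_b^*$ and has length at least $p$.

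Next, I would apply the pumping lemma to $x$ to obtain $x = uvw$ with $|v| \ge 1$, $|uv| \le p$, and $uv^iw \in P_b^*$ for all $i \ge 0$. The inequality $|uv| \le p \le |(k)_b|$ forces $v$ to lie entirely within the initial block $(k)_b$, so pumping down to $uw$ (i.e., $i = 0$) deletes $|v| \ge 1$ digits from $(k)_b$ while leaving the trailing $0^{N-1}\,1$ untouched. If $uw$ happens to begin with the digit $0$, then $uw \notin P_b^*$, because every nonempty word in $P_b^*$ must begin with the leading digit of some prime's base-$b$ representation. Otherwise, $uw = (k')_b\,0^{N-1}\,1$ for a positive integer $k'$ whose base-$b$ representation has strictly fewer digits than that of $k$; this forces $k' < b^{|(k)_b|-1} \le k = f_b(N)$, so Lemma~\ref{noP*} yields $uw \notin P_b^*$. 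Either subcase contradicts the conclusion of the pumping lemma.

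The one step that actually requires care is arranging for the pumping decomposition to be forced inside the $(k)_b$ head of $x$ rather than in the trailing block $0^{N-1}\,1$: if $v$ lay among the zeros, deleting it would only change the exponent to $N - |v|$, yielding $(k \times b^{N-|v|}+1)_b$, which our hypotheses give no way to rule out of $P_b^*$. Proposition~\ref{bigf} is precisely the lever that inflates $(k)_b$ past the pumping length and thereby pins $v$ inside $(k)_b$; after that, the leading-zero subcase is a short syntactic observation about $P_b^*$, and the remaining subcase is a direct appeal to Lemma~\ref{noP*}.
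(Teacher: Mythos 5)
Your proof is correct and takes essentially the same route as the paper: use Proposition~\ref{bigf} to make the head block $(f_b(N))_b$ at least as long as the pumping length, pump down so only digits of that block are deleted, and rule out the resulting word via the no-leading-zero observation together with Lemma~\ref{noP*}. The only cosmetic differences are your choice $K=b^{p-1}$ (head of length $\ge p$, which suffices to trap $v$ in the head) versus the paper's $K=b^p$, and your slightly more explicit handling of the leading-zero subcase.
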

\begin{proof}
For the sake of contradiction, suppose $P_b^*$ is regular. Let $p \ge 1$ be the pumping length of $P_b^*$. By Proposition \ref{bigf}, let $N \ge 2$ such that $f_b(N) > b^p$. Consider the string:
$$
s
= (f_b(N) \times b^N + 1)_b
= (f_b(N))_b 0^{N-1} 1
$$

Note that $f_b(N) \times b^N + 1$ is prime. Therefore, $s \in P_b$ and $s \in P_b^*$.

Since $f_b(N) > b^p$, we have $|(f_b(N))_b| \ge |(b^p)_b| = p+1$.

By the pumping lemma, there exists a decomposition $s = xyz$ where $|y| \ge 1$, $|xy| \le p$, and $xy^iz \in P_b^*$ for all $i \ge 0$. Consider $xz \in P_b^*$. Since $|xy| < |(f_b(N))_b|$, $y$ is a subword of $(f_b(N))_b$. Therefore, $xz$ represents $k \times b^N + 1$ for some $0 \le k < f_b(N)$. We reject $k = 0$ because $xz$ cannot start with $0$. We reject $1 \le k < f_b(N)$ because Lemma \ref{noP*} implies $xz \notin P_b^*$. Contradiction.
\end{proof}

\begin{corollary}
Let $b \ge 2$. $P_b$ is not regular.
\end{corollary}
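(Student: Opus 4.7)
\medskip

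\noindent\textbf{Proof proposal.} The plan is to derive this as an immediate consequence of Theorem \ref{PSnotreg} via closure properties. The class of regular languages is closed under the Kleene star operation: if $L$ is regular, then $L^*$ is regular (for instance, by converting a DFA/NFA for $L$ into an NFA for $L^*$ by adding $\varepsilon$-transitions from accepting states back to a new start state, or by the standard Kleene-algebra closure of regular expressions).

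Therefore I would proceed by contrapositive (or equivalently, by contradiction). Assume that $P_b$ is regular. Then by the closure property just mentioned, $P_b^*$ is regular. But Theorem \ref{PSnotreg} asserts that $P_b^*$ is not regular, giving a contradiction. Hence $P_b$ cannot be regular.

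No additional number-theoretic content is needed here, since all the heavy lifting has already been done in Proposition \ref{bigf}, Lemma \ref{noP*}, and Theorem \ref{PSnotreg}. There is no real obstacle: the only thing to be careful about is to cite (or at least name) the closure-under-Kleene-star fact, which is completely standard, so the proof will be only a couple of lines long.
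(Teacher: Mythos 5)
Your proposal is correct and matches the paper's argument exactly: the paper likewise deduces the corollary from Theorem \ref{PSnotreg} via closure of regular languages under Kleene star. Nothing further is needed.
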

\begin{proof}
We can now recover the result by Minsky and Papert. If $L$ is a regular language, then $L^*$ is regular. From Theorem \ref{PSnotreg}, it follows that $P_b$ is not regular.
\end{proof}

\section{Acknowledgment}

I thank Jeffrey Shallit for telling me about this problem.

\end{document}